\crefname{appsec}{Appendix}{Appendices}
\theoremstyle{plain}
\newtheorem{theorem}{Theorem}[section]
\newtheorem{proposition}[theorem]{Proposition}
\newtheorem{lemma}[theorem]{Lemma}
\theoremstyle{definition}
\newtheorem*{assumption*}{Assumption}
\theoremstyle{remark}
\newtheorem{remark}[theorem]{Remark}
\crefname{lemma}{Lemma}{Lemmas}
\crefname{theorem}{Theorem}{Theorems}
\crefname{definition}{Definition}{Definitions}
\crefname{fact}{Fact}{Facts}
\crefname{claim}{Claim}{Claims}
\crefname{proposition}{Proposition}{Propositions}
\DeclareMathOperator{\Haf}{Haf}
\DeclareMathOperator{\Perm}{Perm}
\newcommand{\dist}{\mathrm{dist}}
\newcommand{\eps}{\varepsilon}
\newcommand{\defeq}{:=}
\renewcommand{\epsilon}{\varepsilon}
\newcommand{\abs}[1]{\left\vert#1\right\vert}
\def\*#1{\boldsymbol{#1}} 
\def\+#1{\mathcal{#1}} 
\def\-#1{\mathrm{#1}} 
\def\=#1{\mathbb{#1}} 
\def\!#1{\mathfrak{#1}} 
\newcommand\boxprod{\mathbin{\text{\scalebox{.84}{$\square$}}}}
\def\oPr{\mathop{\mathrm{Pr}}}
\renewcommand{\Pr}[2][]{ \ifthenelse{\isempty{#1}}
  {\oPr\left[#2\right]}
  {\oPr_{#1}\left[#2\right]} }
\def\oE{\mathop{\mathbb{E}}}
\newcommand{\E}[2][]{ \ifthenelse{\isempty{#1}}
  {\oE\left[#2\right]}
  {\oE_{#1}\left[#2\right]} }
\def\oVar{\mathrm{Var}}
\newcommand{\Var}[2][]{ \ifthenelse{\isempty{#1}}
  {\oVar\left[#2\right]}
  {\oVar_{#1}\left[#2\right]} }
\def\oEnt{\mathrm{Ent}}
\newcommand{\Ent}[2][]{ \ifthenelse{\isempty{#1}}
  {\oEnt\left[#2\right]}
  {\oEnt_{#1}\left[#2\right]} }
\newcommand{\PhiEnt}[2][]{ \ifthenelse{\isempty{#1}}
  {\oEnt^\phi\left[#2\right]}
  {\oEnt^\phi_{#1}\left[#2\right]} }
\title{Simulating Gaussian boson sampling on graphs in polynomial time}
\author{Konrad Anand}
\author{Zongchen Chen}
\author{Mary Cryan}
\author{Graham Freifeld}
\author{Leslie Ann Goldberg}
\author{Heng Guo}
\author{Xinyuan Zhang}
\address[Konrad Anand, Mary Cryan, Graham Freifeld, Heng Guo]{School of Informatics, University of Edinburgh, Informatics Forum, Edinburgh, EH8 9AB, United Kingdom}
\address[Zongchen Chen]{School of Computer Science, Georgia Institute of Technology, North Avenue, Atlanta, Georgia 30332, USA}
\address[Leslie Ann Goldberg]{Department of Computer Science, University of Oxford, Wolfson Bldg, Parks Rd, Oxford OX1 3QD, United Kingdom}
\address[Xinyuan Zhang]{State Key Laboratory for Novel Software Technology, New Cornerstone Science Laboratory, Nanjing University, 163 Xianlin Avenue, Nanjing, Jiangsu Province, China}
\pgfplotsset{compat=1.18} 
\begin{document}

\begin{abstract}
  We show that a distribution related to Gaussian Boson Sampling (GBS) on graphs can be sampled classically in polynomial time. 
  Graphical applications of GBS typically sample from this distribution, and thus quantum algorithms do not provide exponential speedup for these applications.
  We also show that another distribution related to Boson sampling can be sampled classically in polynomial time.
\end{abstract}

\maketitle

\section{Introduction}

Bosons are subatomic particles with quantum spins. Boson sampling (BS) 
is a model of quantum computation based on interacting bosons that can be implemented using an optical network. It
was proposed by Aaronson and Arkhipov \cite{AA13} as a model that may have quantum advantage in the sense that sampling approximately from the output distribution can be done on a Boson computer. However, assuming two conjectures - the permanent anti-concentration conjecture (concerning the permanent of a matrix of i.i.d.\ Gaussians) - and the permanent of Gaussian conjecture (that estimating this permanent is 
\#P-hard), they show that classically sampling from the same distribution would collapse the polynomial hierarchy.
Later, Gaussian boson sampling (GBS), a variant of BS which avoids some of the costs of producing photon sources, was proposed by Hamilton, Kruse, Sansoni, Barkhofen, Silberhorn, and Jex \cite{HKSBSJ17}.
GBS has gained popularity quickly, and various teams have made considerable efforts to build photonic quantum computers based on it \cite{GBS-Science,GBS-PRL,GBS-Xanadu22,GBS-Xanadu25}.

The main proposed applications of GBS are to solve hard graph problems, such as counting perfect matchings \cite{BDRSW18}, finding densest subgraphs \cite{AB18},  graph isomorphisms \cite{BFIKS21}, planted bipartite cliques \cite{CMT25}, and graph colourings \cite{EBM26}. 
The implementation in \cite{GBS-PRL} 
uses samples from GBS for related graph search problems. 

We start by defining the notation that we will use to sample from the GBS distribution on graphs (Theorem~\ref{thm:main-GBS}). While our presentation is self-contained, we mostly use the notation of~\cite{ZZWWYYXL25}. We will subsequently extend to a related Boson sampling setting with non-negative weights (Theorem~\ref{thm:main-BS}).
Given a graph~$G=(V,E)$ and a subset $S 
\subseteq V$, let $|S|$ denote the size of~$S$ and let
$PM(S)$ denote the number of perfect matchings in the induced subgraph~$G[S]$. There is a positive real number~$c$
 --- specifically $c$ is the inverse of the maximum norm of an eigenvalue of the adjacency matrix of~$G$. Then the GBS distribution corresponding to~$G$ is defined as follows:
\begin{align} \label{eqn:GBS-G-intro} 
\mu_{GBS,G}(S)\propto c^{2\abs{S}}PM(S)^2.
\end{align}
In GBS applications the number $c$ is in the range $(0,1)$ but in this paper we will not make any assumptions except that $c$ is real and positive.

While there are certainly fast quantum algorithms for sampling from the GBS distribution  --- this is the whole point of GBS --- it was an open question whether there is a fast classical algorithm, even in the unweighted (graph) case.  
Zhang, Zhou, Wang, Wang, Yang, Yang, Xue, and Li \cite{ZZWWYYXL25} showed 
recently that there is a fast algorithm for very dense graphs.  In particular, they \cite[Theorem 5]{ZZWWYYXL25} showed that
for any real number $\xi\geq 1$ 
there is an algorithm which samples from $\mu_{GBS,G}$ in time $\widetilde{O}(n^{2\xi+18})$ for $n$-vertex graphs with minimum degree at least $n-\xi$. Note that this is a very severe restriction on the density of the graph.

A related result which is interesting for graph applications is the quantum-inspired algorithm by Oh, Fefferman, Jiang, and Quesada \cite{OFJQ24}. 
Motivated by the application of GBS to graph problems and by the desire to develop efficient classical algorithms, they proposed sampling from a distribution that is a little different from $\mu_{GBS,G}$, in the sense that 
the probability of the output~$S$ is proportional to $PM(S)$ rather than to $PM(S)^2$ as in \eqref{eqn:GBS-G-intro}. They found that, in practice, their classical algorithm (from the different distribution) did not perform much worse in experiments than GBS.
Their distribution can be sampled by two-photon boson sampling, which can be simulated classically (or can be directly sampled classically using the algorithm of Jerrum and Sinclair; see Proposition~\ref{prop:JS89}). Nevertheless, their work left open the question of whether there is an efficient classical algorithm for GBS in the non-negative case.

Our main result is that there is indeed a classical polynomial-time algorithm to sample from~$\mu_{GBS,G}$ (for all graphs).

\begin{restatable}{theorem}{mainGBS}\label{thm:main-GBS} 
  There is an algorithm that, given a graph $G=(V,E)$ with associated real~$c>0$, and arbitrary~$\epsilon$, samples from a distribution that is $\epsilon$-close to $\mu_{GBS,G}$ in total variation distance, in time $O(\overline{c}mn^4 \log^2 (n\overline{c}/\epsilon))$ where $m=\abs{E}$, $n=\abs{V}$, and $\overline{c}=\max\{1,c\}$.
\end{restatable}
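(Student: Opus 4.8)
The plan is to exhibit $\mu_{GBS,G}$ as the pushforward, under a simple projection, of a weighted perfect-matching distribution on an auxiliary graph, and then to sample that matching distribution. The combinatorial starting point is that $PM(S)^2$ counts ordered pairs $(M_1,M_2)$ of perfect matchings of $G[S]$, and that the union $M_1\cup M_2$ of such a pair is always a vertex-disjoint union of ``doubled'' edges (edges used by both $M_1$ and $M_2$) and even cycles on which $M_1$ and $M_2$ alternate; conversely a subgraph $H\subseteq G[S]$ of this shape, spanning $S$, arises from exactly $2^{k(H)}$ ordered pairs, where $k(H)$ is the number of cycles of $H$. First I would encode such a ``doubled edges $+$ two-coloured even cycles'' structure, with the colours red $=M_1$ and blue $=M_2$, as a perfect matching of the prism $\Gamma := G \square K_2$: two disjoint copies $G_R$ and $G_B$ of $G$ joined by rungs $r_v := R_vB_v$, with edge weight $c^2$ on every edge of $G_R\cup G_B$ and weight $1$ on every rung. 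This gives a weight-preserving bijection between perfect matchings $N$ of $\Gamma$ and triples $(S,M_1,M_2)$ with $M_1,M_2$ perfect matchings of $G[S]$, under which $N$ has weight $c^{2|S|}$ and $S=\{v: r_v\notin N\}$; hence $\sum_{N}\mathrm{weight}(N)=\sum_S c^{2|S|}PM(S)^2$, and sampling a perfect matching $N$ of $\Gamma$ with probability proportional to its weight and returning $\{v:r_v\notin N\}$ produces a sample from $\mu_{GBS,G}$. (In particular, when $G$ is bipartite every cycle of $H$ is automatically even, the structures above are exactly the cycle covers of $G[S]$, and the identity collapses to $\sum_S c^{2|S|}PM(S)^2=\Perm(c^2 A+I)$, where $A$ is the adjacency matrix of $G$.)

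It therefore remains to sample, to within total variation distance $\epsilon$, from the weighted perfect-matching distribution of $\Gamma$. The plan is to run the Jerrum--Sinclair Markov chain on matchings of $\Gamma$ carrying the above edge activities --- the chain that moves among perfect and near-perfect matchings by inserting, deleting and sliding single edges --- and to show that it mixes in time polynomial in $n$, $m=|E|$ and $\overline c=\max\{1,c\}$. The heart of the argument, and the step I expect to be the main obstacle, is the structural estimate that $\Gamma$ is ``favourable'' in the Jerrum--Sinclair sense: the total weight of near-perfect matchings of $\Gamma$ is at most a polynomial factor times the total weight of perfect matchings, so that perfect matchings carry non-negligible stationary mass and the canonical-path congestion is controlled. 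This is where the special structure of the prism has to be exploited: $\Gamma$ always contains the all-rungs perfect matching $M^*$; for any near-perfect matching $N$ exposing two vertices $u,v$, the symmetric difference $N\triangle M^*$ contains a single alternating $u$--$v$ path made of rungs (weight $1$) and copy-edges (weight $c^2$); sliding the exposed pair along this path into rungs turns $N$ into a perfect matching. The task is to organise these path surgeries --- together with the red/blue symmetry between $G_R$ and $G_B$ and the fact that rungs carry weight $1$ --- into a genuinely polynomial count-and-weight bound, uniformly over all input graphs $G$; naively the copy-edge weights along a long path contribute a factor as large as $\overline c^{\,\Theta(n)}$, so the surgery must be set up so that these contributions telescope or are absorbed by comparably heavy perfect matchings.

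Granting this estimate, the Jerrum--Sinclair analysis yields a mixing time of $O(n^4\log(n\overline c/\epsilon))$ steps of the chain; each step inspects $O(m)$ incident edges, the $c$-dependent activities contribute the leading factor $\overline c$, and performing all arithmetic with $O(\log(n\overline c/\epsilon))$ bits of precision supplies the second logarithmic factor, for a total running time of $O(\overline c\, m\, n^4\log^2(n\overline c/\epsilon))$. Finally I would note that applying the deterministic projection $N\mapsto\{v:r_v\notin N\}$ to an approximately correct sample from the matching distribution can only decrease total variation distance, so the output is $\epsilon$-close to $\mu_{GBS,G}$, which completes the proof of \Cref{thm:main-GBS}.
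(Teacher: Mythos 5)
Your reduction is the same as the paper's: the prism $G \boxprod K_2$ with activity $c^2$ on the two copies of $E$ and $1$ on the rungs, the weight-preserving correspondence between perfect matchings of the prism and triples $(S,M_1,M_2)$, and the observation that projecting onto $S=\{v: r_v\notin N\}$ pushes the weighted perfect-matching distribution forward to $\mu_{GBS,G}$ (this is exactly Lemma~\ref{lem:ind-dist}, and your data-processing remark about TV distance is fine). The gap is the step you yourself flag as the main obstacle: the polynomial bound on the ratio of (weighted) near-perfect to perfect matchings of the prism is never proved. Your sketched route --- surgery along the alternating path in $N\triangle M^*$, where $M^*$ is the all-rungs matching --- is not carried out, and the difficulty you name is real: that map is many-to-one and the weight distortion along a long alternating path is of order $\overline{c}^{\,\Theta(n)}$, with no mechanism supplied to make the contributions ``telescope''. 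Since the entire mixing-time claim (and hence the stated $O(\overline{c}mn^4\log^2(n\overline{c}/\epsilon))$ bound) rests on this estimate, the proof is incomplete as written.

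The paper closes this gap by a short algebraic argument (Lemma~\ref{lem:PM-vs-NPM}) that does not use canonical paths or surgeries at all, but exploits precisely the sum-of-squares structure created by the doubling. Writing $Z_{n}=\sum_{S}c^{2|S|}\Haf(A_S)^2$ for the perfect-matching weight of the prism, one expands the near-perfect weight $Z_{n-1}$ according to the two exposed vertices: the terms are of the form $c^{2|S|+2}\Haf(A_S)\Haf(A_{S\cup\{u_1,u_2\}})$ (both exposed vertices in the same copy) and $c^{2|S|+2}\Haf(A_{S\cup\{u_1\}})\Haf(A_{S\cup\{u_2\}})$ (one in each copy). Applying AM--GM to each product turns every term into squares of the form $c^{2|T|}\Haf(A_T)^2$, i.e.\ into terms of $Z_n$, and counting the at most $\binom{n}{2}$ resp.\ $n^2$ choices of exposed pair gives $Z_{n-1}<2n^2Z_n$ --- uniformly in $c$, which is exactly what defeats the $\overline{c}^{\,\Theta(n)}$ issue in your approach. (The paper then still has to convert this into a sampler: it rescales all activities by $4n^2$, invokes log-concavity of the weighted matching sequence to get $Z'_{k-1}\le Z'_k/2$ for every $k$, so that perfect matchings carry constant mass under $\mu_{matching,\lambda'}$, and finishes with the Jerrum--Sinclair all-matchings sampler plus rejection; your proposal also leaves this reweighting/rejection bookkeeping, and hence the precise running-time accounting, unaddressed.)
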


Note that Theorem~\ref{thm:main-GBS} works for any $c>0$, not just for $c\in(0,1)$ as in the GBS application.

\begin{remark}
As a consequence of Theorem \ref{thm:main-GBS}, we have shown that there is no exponential-time quantum speedup for any application based on sampling from the distribution $\mu_{GBS,G}$.
\end{remark}

Our  algorithm to sample from $\mu_{GBS,G}$  is different from the approach in \cite{ZZWWYYXL25}. In fact, our algorithm is simpler, and faster, and it   also applies to all graphs. 
We first construct the Cartesian product
$G\boxprod K_2$ of $G$ with an edge.
We then note that the distribution~$\mu_{GBS,G}$
can be obtained by first sampling a perfect matching from $G\boxprod K_2$ with an appropriate weight function on edges
and then projecting this perfect matching onto the vertex set matched by the original edges of $G$.
To sample weighted perfect matchings from $G\boxprod K_2$, we use the  Markov chain of Jerrum and Sinclair~\cite{JS89}.
The key property which guarantees that the 
Jerrum--Sinclair (JS) chain generates a perfect matching in polynomial time is that the ratio between the number of near-perfect matchings 
of $G
\boxprod K_2$
(matchings with two unmatched vertices) and the number of perfect matchings of $G\boxprod K_2$ is bounded above by a polynomial in the size of the graph.
We show that, for any graph~$G$, this property holds for $G\boxprod K_2$.

In addition to Gaussian boson sampling, we consider the original boson sampling model~\cite{AA13}. Here (using the notation of~\cite{CC18}),  the input is an $m\times n$ matrix $A$ with $n \leq m$, where $A$ is the first $n$ columns of a random $m\times m$ unitary matrix.
Let $\Phi_{m,n}$ denote the set of non-negative integer vectors $\mathbf{z}=\{z_1,...,z_m\}$ such that $\sum_{i=1}^m z_i=n$.
The boson sampling distribution is
\begin{align}\label{eqn:BS-intro}
    \forall \mathbf{z}\in\Phi_{m,n},\quad\quad\mu_{BS}(\mathbf{z})=\frac{\abs{\Perm(A_{\mathbf{z}})}^2}{\prod_{i=1}^m z_i!},
\end{align}
where $\Perm(\cdot)$ is the permanent function and $A_{\mathbf{z}}$ is the
square matrix composed of $z_i$~copies of row~$i$ of~$A$. 
Exponential time classical simulation algorithms \cite{CC18,CC24} are known for this distribution, 
but it is conjectured that no polynomial-time classical approximate sampler exists \cite{AA13}.
Our second main result shows that, if the input is a non-negative matrix   instead of a truncated unitary matrix, then one can efficiently sample from a similar distribution,
where we modify the definition in \eqref{eqn:BS-intro} by 
renormalising appropriately, but retaining the fact that
$\forall \mathbf{z}\in\Phi_{m,n}$, $\mu_{BS}(\mathbf{z})\propto\frac{\abs{\Perm(A_{\mathbf{z}})}^2}{\prod_{i=1}^m z_i!}$.

\begin{restatable}{theorem}{mainBS}\label{thm:main-BS} 
There is an algorithm that, given a non-negative real $m\times n$ matrix $A$ with $n<m$ and a real number $\epsilon \in (0,1)$,  
samples from a distribution that is $\eps$-close to $\mu_{BS}$ in total variation distance, in time $O\left(\frac{m^7n^{14}}{\eps^7}\log^4\left(\frac{m n}{\eps}\right)\right)$.
\end{restatable}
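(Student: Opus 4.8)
\textbf{Proof proposal for \Cref{thm:main-BS}.} The plan is to mirror the strategy behind \Cref{thm:main-GBS}: reduce the target distribution to the law of a projection of a weighted perfect matching, and then sample the matching. The first step is an algebraic identity for permanents with repeated rows. Writing $A_{\mathbf{z}}$ for the $n\times n$ matrix whose rows are $z_i$ copies of the $i$-th row of $A$, a direct counting argument gives
\[
  \Perm(A_{\mathbf{z}}) \;=\; \Bigl(\prod_{i=1}^m z_i!\Bigr)\sum_{f}\ \prod_{j=1}^n A_{f(j),j},
\]
where $f$ ranges over all maps $[n]\to[m]$ with $|f^{-1}(i)|=z_i$ for every $i$; hence $\mu_{BS}(\mathbf{z})\propto \bigl(\prod_i z_i!\bigr)\bigl(\sum_f \prod_j A_{f(j),j}\bigr)^2$, and summing over $\mathbf{z}$ identifies the normalising constant as $\Perm(A^\trans A)$. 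I would then package this as the following sampler and verify correctness by a short computation: first draw a permutation $\sigma\in S_n$ with probability proportional to $\prod_{j=1}^n (A^\trans A)_{j,\sigma(j)}$; next, independently for each column $j$, draw $r_j\in[m]$ with probability proportional to $A_{r_j,j}A_{r_j,\sigma(j)}$; finally output $\mathbf{z}$ with $z_i=|\{j:r_j=i\}|$. The point is that for fixed $\sigma$ the second-stage weights sum to exactly $\prod_j (A^\trans A)_{j,\sigma(j)}$, so the marginal of $\sigma$ is the ``permanent distribution'' of $A^\trans A$, while the conditional law of $\mathbf{z}$ given $\sigma$ redistributes this mass precisely as the displayed identity requires.

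Given this reduction, the whole task becomes sampling $\sigma$ approximately from the permanent distribution of the nonnegative matrix $B:=A^\trans A$ --- equivalently, sampling an edge-weighted perfect matching of the bipartite graph $\Gamma_B$ on $[n]\sqcup[n]$ with edge weights $B_{jk}$ --- after which the second stage is $n$ independent categorical draws, each in $O(m)$ time, and projection is free. To apply Proposition~\ref{prop:JS89} I would first replace $B$ by a matrix $\widetilde B$ with entries of bounded bit-length (rounding each $B_{jk}$, and, where necessary, raising very small off-diagonal entries to a common threshold), chosen so that (i) the total variation distance between the permanent distributions of $\widetilde B$ and $B$ is at most $\eps/2$, and (ii) the ratio between the total weight of near-perfect matchings of $\Gamma_{\widetilde B}$ and the total weight of perfect matchings is polynomially bounded. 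Running the Jerrum--Sinclair chain on $\Gamma_{\widetilde B}$ then yields $\sigma$ within total variation $\eps/2$, in time polynomial in $n$, in the ratio from (ii), and in the weight parameters; composing with the second stage and the projection gives a sample $\eps$-close to $\mu_{BS}$, and tracking all parameters should reproduce the stated running time.

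The hard part will be item (ii): bounding the near-perfect-to-perfect matching ratio for $\Gamma_B$. Unlike $G\boxprod K_2$, $\Gamma_B$ has no fixed combinatorial shape, so the argument must exploit that $B=A^\trans A$ is nonnegative and positive semidefinite with strictly positive diagonal (one may normalise the columns of $A$ to unit length, which leaves $\mu_{BS}$ unchanged, so $B_{jj}=1$). A clean starting observation is that a near-perfect matching of $\Gamma_B$ missing the left vertex $b$ and right vertex $a$ contributes weight at most $\Perm(B)/B_{ab}$ whenever $B_{ab}>0$ (extend it to a perfect matching by adding the edge $ba$ and compare), using $\Perm(B)\ge\prod_j B_{jj}$; so once every off-diagonal entry has been raised to at least some $\delta=\delta(\eps,n)$, the ratio is $O(n^2/\delta)$. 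The delicate point is choosing $\delta$ large enough for this while keeping the distribution distortion in (i) below $\eps/2$ --- the entrywise perturbations have to be controlled against $\Perm(B)$ itself, and this trade-off is presumably what forces the polynomial dependence on $1/\eps$ in the running time. (If one prefers to avoid a ratio bound altogether, one may instead invoke a black-box almost-uniform sampler for the permanent distribution of a nonnegative matrix in place of Proposition~\ref{prop:JS89}; the reduction and discretisation analysis are unchanged, and the heavy polynomial in the running time then comes from that subroutine.) Either way, the remaining work is the two error analyses --- discretisation of $B$ and mixing of the chain --- together with the parameter accounting.
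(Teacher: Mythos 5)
Your reduction is correct, and it is genuinely different from the paper's (and in some respects sharper). The identity $\Perm(A_{\mathbf{z}})=\bigl(\prod_i z_i!\bigr)\sum_f\prod_j A_{f(j),j}$, the normalisation $\sum_{\mathbf{z}}\abs{\Perm(A_{\mathbf{z}})}^2/\prod_i z_i!=\Perm(A^\trans A)$, and the two-stage sampler all check out: the joint law of $(\sigma,r)$ is proportional to $\prod_j A_{r_j,j}A_{r_j,\sigma(j)}$, and counting the $\prod_i z_i!$ permutations compatible with a given pair of type-$\mathbf{z}$ maps shows the marginal of $\mathbf{z}$ is exactly $\mu_{BS}$; an $\eps$-approximate $\sigma$ then yields an $\eps$-approximate $\mathbf{z}$ by data processing. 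The paper instead builds a four-layer bipartite gadget on $2n+2mk$ vertices with $k=\lceil 4n^2/\eps\rceil$ duplicated mode-vertices and two mirrored copies (the mirroring realises the square $\Perm(A_{\mathbf{z}})^2$, the duplication approximates the $1/z_i!$ factors up to $\prod_j(1-j/k)$, costing $\eps/2$ in TV), and then runs Proposition~\ref{prop:JSV04} on that large graph, which is where the heavy $m^7n^{14}/\eps^7$ bound comes from. Your algebraic reduction handles the square and the factorials exactly, with no $\eps$-loss in the reduction and only the $n\times n$ matrix $B=A^\trans A$ left to sample from.

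The gap is in how you propose to obtain $\sigma$. The primary plan --- run the Jerrum--Sinclair chain (Proposition~\ref{prop:JS89}) on $\Gamma_B$ after raising small off-diagonal entries to a threshold $\delta$ --- is not established and looks circular: the extra mass created by raising $B_{ab}$ from (near) zero to $\delta$ includes a term of order $\delta\cdot\Perm(B^{a,b})$, where $B^{a,b}$ is the corresponding minor, so controlling the TV distortion in your item (i) requires bounding minor permanents against $\Perm(B)$ --- which is precisely the near-perfect/perfect ratio whose absence forced the perturbation in the first place. With unit diagonal and entries in $[0,1]$ this ratio is a priori only bounded by $(n-1)!$, and you give no argument that the Gram (PSD) structure of $B$ rescues it; moreover Proposition~\ref{prop:JS89}'s running time is pseudo-polynomial in the edge weights, which your rounding does not obviously control. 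Fortunately the fix is already in the paper, and you half-noticed it in your parenthetical: since $B$ is an $n\times n$ non-negative matrix, apply Proposition~\ref{prop:JSV04} directly to the bipartite graph on $[n]\sqcup[n]$ with weights $B_{jk}$ --- no ratio bound and no discretisation needed --- obtaining $\sigma$ within TV distance $\eps$ in time $O(n^7\log^4 n+n^5\log(n/\eps))$; together with the $O(mn^2)$ cost of forming $B$ and the $O(mn)$ second stage, this proves the theorem, indeed with a better running time than the stated bound and than the paper's own construction.
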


Our technique for proving Theorem~\ref{thm:main-BS} is a combinatorial reduction similar to the proof of Theorem \ref{thm:main-GBS}.
Given the matrix~$A$,
we construct a bipartite graph $G$ such that 
a certain distribution on weighted perfect matchings of~$G$  induces a distribution that is close to $\mu_{BS}$.
We then use the algorithm of Jerrum, Sinclair, and Vigoda~\cite{JSV04} to efficiently sample from the distribution on perfect matchings.  

When $A$ is not a non-negative matrix, the Jerrum--Sinclair--Vigoda algorithm (JSV) is no longer applicable.
However our construction would still work, if $\abs{\Perm(A_{\mathbf{z}})}$ could be efficiently approximated for each $\mathbf{z}$.
The task of approximating the norm of these permanents is thus the main source of difficulty for classically simulating boson sampling. The problem of approximating the permanent 
is \textbf{NP}-hard in general, even for real positive semi-definite matrices \cite{Mei23}.

In Section \ref{sec:prelim} we 
define the terminology and
review some known efficient classical algorithms for sampling matchings in graphs. 
We first prove Theorem \ref{thm:main-BS} in Section \ref{sec:BS} as a warmup,
and then prove Theorem \ref{thm:main-GBS} in Section \ref{sec:GBS}.

\section{Preliminaries}\label{sec:prelim}

\subsection{Classical algorithm to sample (perfect) matchings} 

Let $G=(V,E)$ be a graph. 
A \emph{matching} $M$ of~$G$ is a subset of edges such that no two edges of $M$ share an endpoint.
Let $V_M$ denote the vertex set of $M$.
If $V_M=V$, then $M$ is a \emph{perfect matching}. If $|V_M| = |V|-2$ then $M$ is a \emph{near-perfect matching}.
Denote by $\+M$ the set of all matchings of~$G$, and by $\+M_k$ the set of matchings of size $k$.
Then, when $\abs{V}$ is even, the set of perfect matchings is $\+M_{\abs{V}/2}$.
Let $(\lambda_e)_{e\in E}$ be a collection of (non-negative) weights on the edges in~$E$.
Define the following distribution over the matchings of~$G$:
\begin{align}\label{eqn:matching}
    \forall M\in\+M,\quad \quad\mu_{matching,\lambda}(M)\propto \prod_{e\in M}\lambda_e.
\end{align}

Jerrum and Sinclair \cite{JS89} showed  
that there is a polynomial-time algorithm to approximately sample from the distribution \eqref{eqn:matching}. To explain this more precisely, we need the following notation.
The total variation (TV) distance between two distributions~$\mu$ and~$\nu$ over some discrete space~$\Omega$ is defined as 
\begin{align*}
    \dist_{TV}(\mu,\nu)\defeq\frac{1}{2}\sum_{\omega\in\Omega}\abs{\mu(\omega)-\nu(\omega)}.
\end{align*}

\begin{proposition}\label{prop:JS89} (Jerrum and Sinclair \cite{JS89})
There is an algorithm that, given $G=(V,E)$, weights $(\lambda_e)_{e\in E}$, and 
a real number $\epsilon \in (0,1)$,
samples from a distribution that is $\eps$-close to $\mu_{matching,\lambda}$ in TV distance, in time $O(\overline{\lambda}m n^2 \log \frac{n \overline{\lambda}}{\eps})$, 
where $m=\abs{E}$, $n=\abs{V}$, and $\overline{\lambda}=\max_{e\in E}\{1,\lambda_e\}$.
\end{proposition}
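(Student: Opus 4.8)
The plan is to run the classical Jerrum--Sinclair Markov chain on the set $\+M$ of all matchings of $G$ and to invoke their rapid-mixing analysis of the monomer--dimer model. A step from a matching $M$ picks an edge $e=\{u,v\}\in E$ uniformly at random and proposes $M\setminus\{e\}$ if $e\in M$, proposes $M\cup\{e\}$ if $u$ and $v$ are both unmatched in $M$, and otherwise keeps $M$; the proposed matching $M'$ is accepted with the Metropolis probability $\min\{1,\mu_{matching,\lambda}(M')/\mu_{matching,\lambda}(M)\}$, which involves a single edge weight (a holding probability is added to ensure aperiodicity). This chain is irreducible --- every matching reaches $\emptyset$ by a sequence of removals, and $\emptyset$ reaches every matching by a sequence of additions --- and reversible with stationary distribution $\mu_{matching,\lambda}$, and a step can be simulated in $O(1)$ time by storing, for each vertex, the edge (if any) that matches it. So it remains to bound the time for the chain, started from $\emptyset$, to come within $\eps$ of stationarity in TV distance.

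First I would bound the relaxation time $\tau_{\mathrm{rel}}=(1-\lambda_2)^{-1}$ by Sinclair's multicommodity-flow (canonical paths) method. For each ordered pair $(M,M')$ of matchings, route one unit of flow along a canonical path $\gamma_{M,M'}$: the symmetric difference $M\oplus M'$ decomposes into vertex-disjoint alternating paths and even cycles; fix a canonical order on these components and a canonical starting point in each, then ``unwind'' the components one at a time, for each component first deleting its $M$-edges and then inserting its $M'$-edges. Every such path has length $O(n)$. The crux is the congestion bound: for every transition $t=(N,N')$ of the chain,
\[
\frac{1}{\mu_{matching,\lambda}(N)\,P(N,N')}\sum_{(M,M')\,:\,t\in\gamma_{M,M'}}\mu_{matching,\lambda}(M)\,\mu_{matching,\lambda}(M')\,\bigl|\gamma_{M,M'}\bigr|\;=\;O(\overline\lambda\,m\,n),
\]
which I would prove by the standard encoding argument: from $t$ together with an auxiliary object $\eta_t(M,M')$ --- built from $M,M',N,N'$ and easily checked to be (almost) a matching --- one recovers $(M,M')$, so the pairs routed through $t$ inject into $\+M$; meanwhile a short computation gives $\mu_{matching,\lambda}(M)\mu_{matching,\lambda}(M')\le O(\overline\lambda)\cdot\min\{\mu_{matching,\lambda}(N),\mu_{matching,\lambda}(N')\}\cdot\mu_{matching,\lambda}(\eta_t(M,M'))$, the factor $\overline\lambda$ accounting for the $O(1)$ edges by which the edge multiset $M\cup M'$ differs from $N\cup\eta_t(M,M')$. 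Since $\mu_{matching,\lambda}(N)P(N,N')=\min\{\mu_{matching,\lambda}(N),\mu_{matching,\lambda}(N')\}/m$ and $\sum_{\widehat M\in\+M}\mu_{matching,\lambda}(\widehat M)=1$, the bound follows, whence $\tau_{\mathrm{rel}}=O(\overline\lambda m n)$.

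Finally I would apply the standard spectral bound: from the start state $\emptyset$, the mixing time is at most $\tau_{\mathrm{rel}}\bigl(\log(1/\mu_{matching,\lambda}(\emptyset))+\log(1/\eps)\bigr)$. Here $\mu_{matching,\lambda}(\emptyset)=1/Z$ with $Z=\sum_{M\in\+M}\prod_{e\in M}\lambda_e\le|\+M|\,\overline\lambda^{\,n/2}=n^{O(n)}\overline\lambda^{\,n/2}$, so $\log(1/\mu_{matching,\lambda}(\emptyset))=O(n\log(n\overline\lambda))$ and the mixing time is $O(\overline\lambda m n\cdot n\log(n\overline\lambda/\eps))=O(\overline\lambda m n^2\log(n\overline\lambda/\eps))$; running the chain from $\emptyset$ for this many steps and returning the current matching gives the claimed algorithm.

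I expect the congestion bound of the second step to be the main obstacle: one must design the canonical paths and the decoding map $\eta_t$ so that \emph{every} transition of the chain, for \emph{every} graph $G$, carries only $O(\overline\lambda m n)$ congestion --- in particular with no worse than linear dependence on $\overline\lambda$. The other ingredients --- irreducibility and reversibility of the chain, the generic passage from spectral gap to mixing time, and the crude bound on $Z$ --- are routine; the combinatorial heart of the Jerrum--Sinclair theorem lies precisely in this encoding.
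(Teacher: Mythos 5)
The paper itself does not prove this proposition: it is quoted from Jerrum and Sinclair, with the specific running time taken from \cite[Proposition 12.4]{JS96}. Your sketch reconstructs exactly that canonical-paths argument, so in spirit it is the same route as the cited proof. Two points, however, keep it from establishing the statement as written. First, the chain analysed in \cite{JS89} and in \cite[Proposition 12.4]{JS96} is not the pure add/delete Metropolis chain you describe: it also has the ``shift'' move $M \mapsto M + e - e'$ when exactly one endpoint of $e$ is matched, and the quoted mixing-time bound (and the canonical paths, which unwind alternating paths and cycles largely by shifts) are for that chain. Your add/delete-only chain is still reversible, irreducible and polynomial-time mixing, but you cannot import the cited numbers for it; the path construction and congestion count have to be redone from scratch.

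Second, and more substantively, the step you yourself flag as the crux --- the congestion bound $O(\overline{\lambda} m n)$ with \emph{linear} dependence on $\overline{\lambda}$ --- is asserted rather than proved, and the ``short computation'' you indicate does not deliver it. With the standard encoding, the multiset $N \uplus \eta_t(M,M')$ can miss up to \emph{two} edges of $M \uplus M'$ (the edge being processed at $t$, and the edge dropped to turn the encoding into a matching when $t$ lies inside a cycle component), which already gives a factor $\overline{\lambda}^{2}$; relating $\pi(N)$ to $\min\{\pi(N),\pi(N')\}$ in the Metropolis denominator costs a further factor $\max\{1,\lambda_{e_t}\}$ that is only partly absorbed. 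So as written your argument yields a mixing time of order $\overline{\lambda}^{O(1)} m n^{2}\log(n\overline{\lambda}/\eps)$ --- enough for a polynomial-time sampler, and enough for the way the proposition is used later in the paper up to polynomial factors, but not the specific bound $O(\overline{\lambda} m n^{2}\log(n\overline{\lambda}/\eps))$ stated. To get that dependence you either need the sharper bookkeeping of \cite{JS96} for the chain with shifts (extended to non-uniform edge weights), or you should simply cite that result, as the paper does. The remaining ingredients of your sketch --- reversibility, irreducibility, the spectral bound from the start state $\emptyset$, and $\log(1/\mu_{matching,\lambda}(\emptyset)) = O(n\log(n\overline{\lambda}))$ --- are fine and routine, as you say.
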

This particular running time is derived from \cite[Proposition 12.4]{JS96}.

Proposition~\ref{prop:log-concave} shows that
the number of matchings of size $k$ form a log-concave sequence (see \cite[Theorem 5.1]{JS89}).

\begin{proposition}\label{prop:log-concave} 
    For any graph $G=(V,E)$ and $1\le k\le \abs{V}/2$, $\abs{M_{k-1}}\abs{M_{k+1}}\le \abs{M_k}^2$.
\end{proposition}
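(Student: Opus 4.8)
The plan is to prove this purely combinatorially, by sorting pairs of matchings according to their ``interaction pattern''. Fix $k$ with $1\le k\le\abs{V}/2$. For two matchings $A,B$ of $G$, set $H\defeq A\triangle B$ and $I\defeq A\cap B$. Since $A$ and $B$ are matchings, every vertex has degree at most one in each of them and hence degree at most two in $H$, so $H$ is a vertex-disjoint union of paths and cycles whose edges alternate between $A$-edges and $B$-edges; in particular every cycle of $H$ is even, and $V(H)\cap V(I)=\emptyset$ (a vertex covered by an edge of $I\subseteq A\cap B$ cannot also lie on an alternating path or cycle). Call the pair $\tp{H,I}$ the \emph{type} of $\tp{A,B}$. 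Conversely, given a type $\tp{H,I}$ — meaning $H$ is such a disjoint union of paths and even cycles and $I$ is a matching of $G$ with $V(I)\cap V(H)=\emptyset$ — one recovers a pair $\tp{A,B}$ of this type exactly by choosing, independently for each component of $H$, one of the two ways to two-colour its edges into ``$A$-edges'' and ``$B$-edges'' consistently with the alternation, and setting $A=I\cup\set{A\text{-edges}}$, $B=I\cup\set{B\text{-edges}}$. Using $V(I)\cap V(H)=\emptyset$ and the vertex-disjointness of the components of $H$, both $A$ and $B$ are genuine matchings, $A\triangle B=H$, $A\cap B=I$, and distinct colourings give distinct pairs, so this is a bijection between {pairs $\tp{A,B}$ of type $\tp{H,I}$} and {per-component colourings}.

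Now count the effect of a colouring on sizes. If a component of $H$ is an even cycle or an even-length path, its two colourings contribute equally to $\abs{A}$ and to $\abs{B}$. If it is a path with $2j+1$ edges, one colouring (call the path \emph{$A$-heavy}) puts $j+1$ of its edges into $A$ and $j$ into $B$, and the other does the reverse. Writing $c(H)$ for the number of components of $H$ that are even cycles or even paths, $r(H)$ for the number of odd-length path components, the number of pairs of type $\tp{H,I}$ in which exactly $s$ of the odd paths are $A$-heavy equals $2^{c(H)}\binom{r(H)}{s}$, and such a pair satisfies $\abs{A}+\abs{B}=\abs{H}+2\abs{I}$ and $\abs{B}-\abs{A}=r(H)-2s$. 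Summing over all types, with the convention $\binom{a}{b}=0$ unless $b\in\set{0,1,\dots,a}$,
\[
m_{k-1}m_{k+1}=\sum_{\tp{H,I}}2^{c(H)}\binom{r(H)}{r(H)/2-1}
\qquad\text{and}\qquad
m_k^2=\sum_{\tp{H,I}}2^{c(H)}\binom{r(H)}{r(H)/2},
\]
where both sums range over the types with $\abs{H}+2\abs{I}=2k$; the extra requirement $\abs{B}-\abs{A}=2$ (resp.\ $=0$), forcing $r(H)$ even with $s=r(H)/2-1$ (resp.\ $s=r(H)/2$), is enforced automatically by the convention. Since $\binom{a}{b}$ is largest when $b$ is as close to $a/2$ as possible, $\binom{r(H)}{r(H)/2-1}\le\binom{r(H)}{r(H)/2}$ for every type, and summing these term-by-term inequalities yields $m_{k-1}m_{k+1}\le m_k^2$.

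The only genuinely mathematical input is that last, one-line monotonicity of binomial coefficients toward the centre; the work is in the bookkeeping of the preceding paragraphs — checking that ``type plus colouring'' is an honest bijection (that the reconstructed $A,B$ really are matchings, and that $A\triangle B$ and $A\cap B$ come out right), and tracking precisely which colourings of the odd paths realise $\abs{A}=k-1,\abs{B}=k+1$ as opposed to $\abs{A}=\abs{B}=k$. I expect the sign/indexing of the odd-path count to be the most error-prone point. (A less self-contained route is to invoke the Heilmann--Lieb theorem that the matching polynomial of $G$ has only real zeros: then $\sum_k m_k x^k$ has only negative real roots, so its coefficient sequence is log-concave by Newton's inequalities. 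The combinatorial argument above is essentially the bijective shadow of this, and is the version I would write out.)
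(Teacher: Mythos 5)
Your proof is correct. Note first that the paper does not prove this proposition itself: it cites Jerrum and Sinclair \cite{JS89}, whose argument takes a pair $(A,B)\in \+M_{k-1}\times\+M_{k+1}$, decomposes $A\triangle B$ into alternating paths and even cycles, and then builds an explicit injection into $\+M_k\times\+M_k$ by swapping the roles of $A$- and $B$-edges along a canonically chosen $B$-heavy odd path (injectivity being the delicate point). You start from exactly the same structural fact --- the symmetric-difference decomposition, with $A\cap B$ vertex-disjoint from it --- but finish differently: instead of an injection you group ordered pairs by their type $(H,I)$, observe that within a fixed type the colourings realising $(\abs{A},\abs{B})=(k-1,k+1)$ versus $(k,k)$ are counted by $2^{c(H)}\binom{r(H)}{r(H)/2-1}$ versus $2^{c(H)}\binom{r(H)}{r(H)/2}$, and conclude term-by-term from the centrality of binomial coefficients. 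Your bookkeeping checks out: the type-plus-colouring correspondence is a genuine bijection, $\abs{B}-\abs{A}=r(H)-2s$ is the right identity, and odd $r(H)$ contributes zero to both sides. A pleasant by-product of your version, worth pointing out, is that it immediately gives the \emph{weighted} log-concavity $Z_{k-1}Z_{k+1}\le Z_k^2$ that the paper actually needs in Section~\ref{sec:GBS} (where it is only asserted that the injection argument ``can be verified'' to extend): since $w(A)w(B)=\prod_{e\in H}\lambda_e\cdot\prod_{e\in I}\lambda_e^2$ depends only on the type and not on the colouring, the same term-by-term comparison goes through verbatim with each type weighted by this factor.
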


Later, to prove \eqref{eqn:PM-percent}, we will need a weighted version of Proposition~\ref{prop:log-concave}, due to Heilmann and Lieb \cite{HL72}.
See Proposition~\ref{prop:Z-log-concave}.

Note that Proposition~\ref{prop:JS89} samples weighted matchings but not \emph{perfect} matchings.
To efficiently sample perfect matchings, one can use the Jerrum-Sinclair algorithm only if $\frac{\abs{M_{t-1}}}{\abs{M_{t}}}$ is bounded by a polynomial of the input size, where $t=\abs{V}/2$.
When this is the case, due to log-concavity (Proposition~\ref{prop:log-concave}),
one can tune the edge weights so that perfect matchings show up sufficiently frequently.
This will be our strategy for Theorem~\ref{thm:main-GBS} later.

On the other hand, in bipartite graphs, Jerrum, Sinclair, and Vigoda \cite{JSV04} showed that
one can efficiently sample from the distribution \eqref{eqn:matching} restricting to perfect matchings, without further assumptions.
The running time of their algorithm was subsequently improved by Bez{\'{a}}kov{\'{a}}, \v{S}tefankovi\v{c}, Vazirani, and Vigoda \cite{BSVV08}.
More precisely, for any graph $G=(V,E)$ and any $M \in \+M_{\abs{V}/2}$, let 
\begin{align}\label{eqn:PM-dist}
     \mu_{PM,\lambda}(M)\propto \prod_{e\in M}\lambda_e.
\end{align}
\begin{proposition}\label{prop:JSV04}
(Jerrum, Sinclair, Vigoda \cite{JSV04};
Bez{\'{a}}kov{\'{a}}, \v{S}tefankovi\v{c}, Vazirani, and Vigoda \cite{BSVV08})
There is an algorithm that, given a bipartite $G=(V,E)$, weights $(\lambda_e)_{e\in E}$, and a real number $\epsilon \in (0,1)$, 
samples from a distribution that is $\eps$-close to $\mu_{PM,\lambda}$ in TV distance, in time $O(n^7\log^4n+n^5\log\frac{n}{\eps})$, 
where $n=\abs{V}$.
\end{proposition}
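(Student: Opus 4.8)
The plan is to obtain $\mu_{PM,\lambda}$ as the conditional distribution, given that the output is a perfect matching, of a rapidly mixing Markov chain, following Jerrum, Sinclair and Vigoda and the refinement of Bez\'akov\'a, \v{S}tefankovi\v{c}, Vazirani and Vigoda. Write $V=V_1\cup V_2$ for the bipartition, and --- after deleting edges lying in no perfect matching and rescaling so that $\max_e\lambda_e=1$, neither of which changes $\mu_{PM,\lambda}$ --- work on the enlarged state space $\Omega=\+M_{\abs{V}/2}\cup\+M_{\abs{V}/2-1}$ of perfect and near-perfect matchings. Each near-perfect matching has two unmatched vertices (``holes'') $u\in V_1$ and $v\in V_2$; one introduces a positive \emph{hole weight} $w(u,v)$ for each such pair and weights $M\in\Omega$ by $\prod_{e\in M}\lambda_e$ if $M$ is perfect and by $w(u,v)\prod_{e\in M}\lambda_e$ if $M$ has holes at $u,v$. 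With the \emph{ideal} choice of $w(u,v)$ --- the ratio of the total $\lambda$-weight of all perfect matchings to the total $\lambda$-weight of all near-perfect matchings with holes exactly at $u$ and $v$ --- every nonempty hole class carries the same total weight as the perfect matchings, so a draw from the weighted distribution is a perfect matching with probability $\Omega(n^{-2})$, and conditioning on that event reproduces $\mu_{PM,\lambda}$ exactly.

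The chain on $\Omega$ moves by choosing a uniformly random edge $e=(u,v)$ and, according to the current matching $M$, deleting $e$ (if $e\in M$), adding $e$ (if $u,v$ are both holes), or rotating (if exactly one of $u,v$ is a hole: delete the $M$-edge at the other endpoint and add $e$), accepting the move with a Metropolis probability computed from $\lambda_e$ and the relevant hole weights. Rapid mixing is established by a canonical-paths / multicommodity-flow argument: to route flow from $M$ to $M'$, superimpose them, decompose $M\triangle M'$ into alternating paths and cycles, fix a cyclic ordering on these, and unwind the pieces one at a time by monotone sequences of rotations; the congestion across any transition is then bounded polynomially via the standard encoding argument (a transition, together with one auxiliary matching, determines $(M,M')$) combined with the balance property of the hole weights. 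Crucially, only constant-factor approximations to the $w(u,v)$ are needed for this bound to go through.

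Because the ideal hole weights are not known in advance, they are learned by simulated annealing. Scale all edge weights by a parameter $t>0$; for $t$ small the instance behaves like the complete bipartite graph with nearly uniform weights, where the hole weights are within a constant factor of explicit values and the chain mixes in $\poly(n)$ time. One then raises $t$ to its target value through a near-optimal schedule of stages, at each stage drawing enough chain samples to re-estimate every hole weight to within a constant factor; this is self-correcting, since a well-balanced chain yields accurate statistics and accurate statistics keep the next chain well balanced. Running the final chain to within $\epsilon$ of stationarity and conditioning on a perfect matching gives a sample $\epsilon$-close to $\mu_{PM,\lambda}$; multiplying (number of stages) $\times$ (samples per stage) $\times$ (per-step mixing cost) and optimising the schedule as in \cite{BSVV08} yields the stated $O(n^7\log^4 n+n^5\log(n/\epsilon))$ bound, the final additive term being the cost of the last high-accuracy draw.

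The hard part is the rapid-mixing analysis: bounding the flow congestion, and, inseparably from it, showing both that constant-factor hole weights suffice and that the annealing can track the true hole weights stage by stage without errors accumulating. Threading the general edge weights $(\lambda_e)_{e\in E}$ through the flow argument and the annealing --- so that they affect the weights but, after normalisation, not the final running time --- is the extra bookkeeping relative to the unweighted permanent case.
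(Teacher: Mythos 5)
This proposition is not proved in the paper at all: it is imported as a black box, the only ``proof'' being the citation to \cite{JSV04,BSVV08} together with the remark that the $O(n^7\log^4 n)$ term is the one-off cost of estimating certain (hole) weights and each subsequent sample costs $O(n^5\log\frac{n}{\eps})$. Your sketch is a faithful outline of exactly that cited argument --- the hole-weighted chain on perfect and near-perfect matchings, the canonical-path/flow congestion bound, simulated annealing to learn the hole weights, and conditioning on the perfect-matching event --- so it takes essentially the same route the paper relies on, just spelled out rather than cited.
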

This running time comes from \cite{BSVV08}. 
One needs to first spend $O(n^7\log^4n)$ time to estimate certain weights, and then each subsequent sampling takes $O(n^5\log\frac{n}{\eps})$ time. 
See \cite[Theorem 1.1 and Theorem 4.1]{BSVV08}.

\subsection{Permanent and Hafnian}

Two matrix functions will be useful throughout this paper --- the permanent and the hafnian.
Let $A$ be an $n\times n$ matrix.
The permanent is defined as follows
\begin{align*}
    \Perm(A)\defeq \sum_{\sigma\in\+{S}_{n}} \prod_{i=1}^n A_{i,\sigma(i)},
\end{align*}
where $S_{n}$ is the symmetric group of order $n$.
If $A$ is the biadjacency matrix of some bipartite graph $G$, 
then $\Perm(A)$ is the number of perfect matchings in $G$.

Similarly, for a $2n\times 2n$ symmetric matrix $A$, 
\begin{align*}
    \Haf(A) \defeq \frac1{2^n n!} \sum_{\sigma \in \mathcal S_{2n}} \prod_{i=1}^n A_{\sigma(2i-1),\sigma(2i)}.
\end{align*}
If $A$ is the adjacency matrix of some graph $G$ with $2n$ vertices, 
then $\Haf(A)$ is the number of perfect matchings of $G$.

\section{Boson sampling}\label{sec:BS}

Boson sampling was proposed by Aaronson and Arkhipov \cite{AA13} as a way to demonstrate quantum advantage.
For exact classical simulation, see \cite{CC18,CC24}.
Mathematically, the distribution to sample from is the following.
Let $A$ be an $m\times n$ matrix with $n\le m$. 
Let $\mathbf{z}=\{z_1,\ldots,z_m\}$ be a vector of non-negative integers with $\sum_{i=1}^{m} z_i=n$,
and let $\Phi_{m,n}$ denote the set of these vectors.
Then,
\begin{align}\label{eqn:BS}
    \forall \mathbf{z}\in \Phi_{m,n},\quad\quad\mu_{BS}(\mathbf{z})\propto \frac{\abs{\Perm(A_\mathbf{z})}^2}{\prod_{i=1}^m z_i !},
\end{align}
where $A_{\mathbf{z}}$ is the square matrix composed of $z_i$ copies of row $i$ of $A$.
Physically, the matrix $A$ is usually the first $n$ columns of an $m\times m$ Haar random unitary matrix, in which case the normalising factor in \eqref{eqn:BS} is $1$.

In the rest of this section, we prove Theorem~\ref{thm:main-BS}, which we re-state for convenience.

\mainBS*

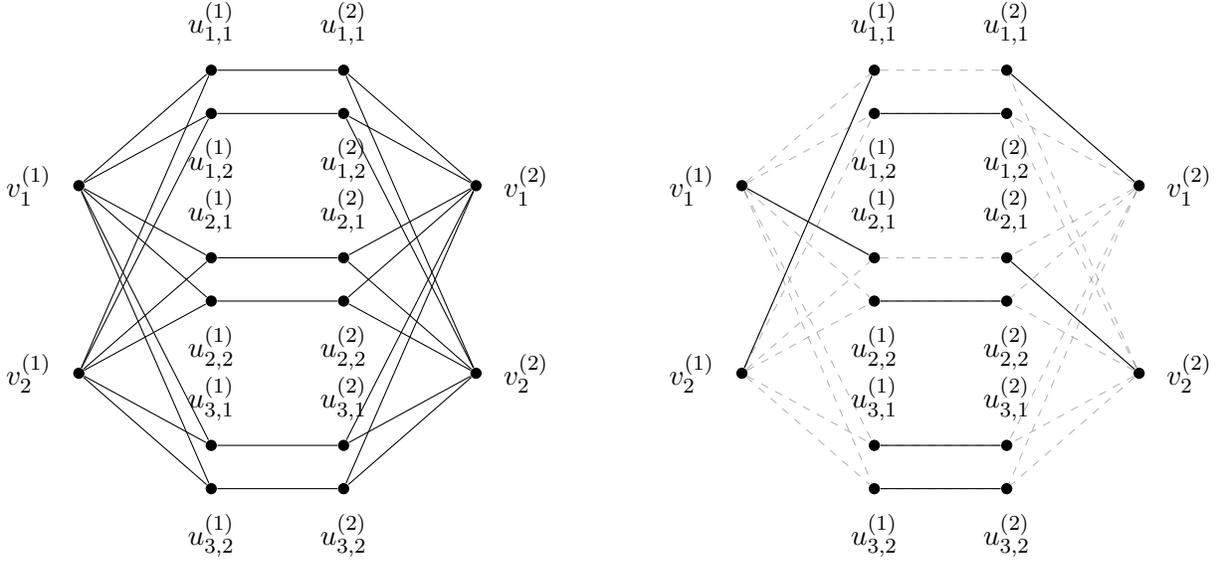
\begin{figure}[htbp]
    \centering
\begin{tikzpicture}[
    every node/.style={circle, fill=black, inner sep=1.5pt},
    xscale=2.2*0.8, yscale=1.2*0.8]

\node (v1) at (0,1.3) {};
\node (v2) at (0,-1.3) {};

\node[draw=none, left=3pt of v1, fill=none] {$v_1^{(1)}$};
\node[draw=none, left=3pt of v2, fill=none] {$v_2^{(1)}$};

\node (u11) at (1,2.9) {};
\node (u12) at (1,2.3) {};
\node (u21) at (1,0.3) {};
\node (u22) at (1,-0.3) {};
\node (u31) at (1,-2.3) {};
\node (u32) at (1,-2.9) {};

\node[draw=none, above=1pt of u11, fill=none] {$u_{1,1}^{(1)}
$};
\node[draw=none, below=1pt of u12, fill=none] {$u_{1,2}^{(1)} $};
\node[draw=none, above=1pt of u21, fill=none] {$u_{2,1}^{(1)} $};
\node[draw=none, below=1pt of u22, fill=none] {$u_{2,2}^{(1)} $};
\node[draw=none, above=1pt of u31, fill=none] {$u_{3,1}^{(1)} $};
\node[draw=none, below=1pt of u32, fill=none] {$u_{3,2}^{(1)} $};

\foreach \i in {1,2} {
    \foreach \j in {1,2,3} {
        \foreach \k in {1,2} {
            \draw (v\i) -- (u\j\k);
        }
    }
}


\node (u11p) at (2,2.9) {};
\node (u12p) at (2,2.3) {};
\node (u21p) at (2,0.3) {};
\node (u22p) at (2,-0.3) {};
\node (u31p) at (2,-2.3) {};
\node (u32p) at (2,-2.9) {};

\node[draw=none, above=1pt of u11p, fill=none] {$u_{1,1}^{(2)}$};
\node[draw=none, below=1pt of u12p, fill=none] {$u_{1,2}^{(2)}$};
\node[draw=none, above=1pt of u21p, fill=none] {$u_{2,1}^{(2)}$};
\node[draw=none, below=1pt of u22p, fill=none] {$u_{2,2}^{(2)}$};
\node[draw=none, above=1pt of u31p, fill=none] {$u_{3,1}^{(2)}$};
\node[draw=none, below=1pt of u32p, fill=none] {$u_{3,2}^{(2)}$};

\foreach \j in {1,2,3} {
    \foreach \k in {1,2} {
        \draw (u\j\k) -- (u\j\k p);
    }
}

\node (v1p) at (3,1.3) {};
\node (v2p) at (3,-1.3) {};

\node[draw=none, right=3pt of v1p, fill=none] {$v_1^{(2)}$};
\node[draw=none, right=3pt of v2p, fill=none] {$v_2^{(2)}$};

\foreach \i in {1,2} {
    \foreach \j in {1,2,3} {
        \foreach \k in {1,2} {
            \draw (v\i p) -- (u\j\k p);
        }
    }
}

\end{tikzpicture}
\hspace{1cm}
\begin{tikzpicture}[
    every node/.style={circle, fill=black, inner sep=1.5pt},
    xscale=2.2*0.8, yscale=1.2*0.8]

\node (v1) at (0,1.3) {};
\node (v2) at (0,-1.3) {};

\node[draw=none, left=3pt of v1, fill=none] {$v_1^{(1)}$};
\node[draw=none, left=3pt of v2, fill=none] {$v_2^{(1)}$};

\node (u11) at (1,2.9) {};
\node (u12) at (1,2.3) {};
\node (u21) at (1,0.3) {};
\node (u22) at (1,-0.3) {};
\node (u31) at (1,-2.3) {};
\node (u32) at (1,-2.9) {};

\node[draw=none, above=1pt of u11, fill=none] {$u_{1,1}^{(1)}
$};
\node[draw=none, below=1pt of u12, fill=none] {$u_{1,2}^{(1)} $};
\node[draw=none, above=1pt of u21, fill=none] {$u_{2,1}^{(1)} $};
\node[draw=none, below=1pt of u22, fill=none] {$u_{2,2}^{(1)} $};
\node[draw=none, above=1pt of u31, fill=none] {$u_{3,1}^{(1)} $};
\node[draw=none, below=1pt of u32, fill=none] {$u_{3,2}^{(1)} $};

\foreach \i in {1,2} {
    \foreach \j in {1,2,3} {
        \foreach \k in {1,2} {
            \draw[draw opacity=0.3, dashed] (v\i) -- (u\j\k);
        }
    }
}


\node (u11p) at (2,2.9) {};
\node (u12p) at (2,2.3) {};
\node (u21p) at (2,0.3) {};
\node (u22p) at (2,-0.3) {};
\node (u31p) at (2,-2.3) {};
\node (u32p) at (2,-2.9) {};

\node[draw=none, above=1pt of u11p, fill=none] {$u_{1,1}^{(2)}$};
\node[draw=none, below=1pt of u12p, fill=none] {$u_{1,2}^{(2)}$};
\node[draw=none, above=1pt of u21p, fill=none] {$u_{2,1}^{(2)}$};
\node[draw=none, below=1pt of u22p, fill=none] {$u_{2,2}^{(2)}$};
\node[draw=none, above=1pt of u31p, fill=none] {$u_{3,1}^{(2)}$};
\node[draw=none, below=1pt of u32p, fill=none] {$u_{3,2}^{(2)}$};

\foreach \j in {1,2,3} {
    \foreach \k in {1,2} {
        \draw[draw opacity=0.3, dashed] (u\j\k) -- (u\j\k p);
    }
}

\node (v1p) at (3,1.3) {};
\node (v2p) at (3,-1.3) {};

\node[draw=none, right=3pt of v1p, fill=none] {$v_1^{(2)}$};
\node[draw=none, right=3pt of v2p, fill=none] {$v_2^{(2)}$};

\foreach \i in {1,2} {
    \foreach \j in {1,2,3} {
        \foreach \k in {1,2} {
            \draw[draw opacity=0.3, dashed] (v\i p) -- (u\j\k p);
        }
    }
}
\draw (v1) -- (u21);
\draw (v2p) -- (u21p);
\draw (v2) -- (u11);
\draw (v1p) -- (u11p);
\draw (u12) -- (u12p);
\draw (u22) -- (u22p);
\draw (u31) -- (u31p);
\draw (u32) -- (u32p);
\end{tikzpicture}
\caption{On the left we have our graph $G$ with vertex sets from left to right: $L_1, R_1, R_2, L_2$. The corresponding matrix $A$ is a $3\times 2$ matrix with all $1$ entries, and $k=2$ in this example. On the right, a perfect matching $M$ selected from $G$ is highlighted. Here $S_1(M) = \{u_{1,1}^{(1)},u_{2,1}^{(1)}\}$ and $\mathbf{z}=(1,1,0)$.}
\label{fig:G-construct}
\end{figure}

\begin{proof}
Given the matrix~$A$, we will construct a bipartite graph $G$ and weights $\lambda$ such that its distribution $\mu_{PM,\lambda}$ in \eqref{eqn:PM-dist} is $\eps/2$-close to 
the distribution $\mu_{BS}$ in \eqref{eqn:BS}. We will
then finish by invoking Proposition~\ref{prop:JSV04} with error $\eps/2$.

Let 
$k = \lceil 4 n^2/\epsilon \rceil$.
We define $G$ as follows:
for each $\ell \in \{1,2\}$  
let $L_\ell = \{ v_j^{(\ell)} : j\in [n]\}$.
Let $R_\ell = \{ u_{i,t}^{(\ell)}: i \in [m], t\in [k] \}$.
The vertex set of $G$ is $L_1 
\cup R_1 
\cup L_2 
\cup R_2$.
The edge set of $G$ is defined as follows:
For each $\ell\in \{1,2\}$ let $E_\ell = \{(v_j^{(\ell)}, u_{i,t}^{(\ell)}): i\in [m], j\in [n], t\in [k], A_{i,j} \neq 0\}$.
For each $e = (v_j^{(\ell)}, u_{i,t}^{(\ell)})\in E_\ell$, $\lambda_e = A_{i,j}$.
Let 
$E_R = \{(u_{i,t}^{(1)},u_{i,t}^{(2)}): j
\in [m], t\in [k]\}$.
For each $e\in E_R$, $\lambda_e = 1$.
The edge set of $G$ is $E_1 \cup E_2 \cup E_R$.
An example is given in Figure \ref{fig:G-construct}.

Let $\mu_{PM,\lambda}$ from 
\eqref{eqn:PM-dist} be the distribution over perfect matchings in $G$ with weights given by $\lambda$. 
Let $M\sim\mu_{PM,\lambda}$ be a sample from this distribution. 
Let $S_1(M)$ be the set of vertices in $R_1$ 
that are matched to vertices in $L_1$ by~$M$ (rather than to vertices in $R_2$). 
    Note that, because every vertex in $L_1$ needs to be matched with some vertex in $R_1$, $\abs{S_1(M)}=\abs{L_1}=n$.
    We identify $S_1(M)$ with a vector $\mathbf{z}\in\Phi_{m,n}$,
    by 
defining $z_i$ as follows for each $i\in [m]$: 
$z_i =  \abs{S_1(M) \cap\left\{u_{i,1}^{(1)},\ldots,u_{i,k}^{(1)}\right\}}$.
    Let $\nu$ be the resulting distribution over $\Phi_{m,n}$.
    We claim that, for $k=\lceil \frac{4n^2}{\eps}\rceil$,
    \begin{align}\label{eqn:closeness}
        \dist_{TV}(\nu,\mu_{BS}) \le \frac{\eps}{2}.
    \end{align}
    The theorem follows from the claim and Proposition~\ref{prop:JSV04}.

    In the rest of the proof we show \eqref{eqn:closeness}.
    For any particular $\mathbf{z}\in\Phi_{m,n}$,
    the way to obtain the sample $\mathbf{z}$ is to first choose $z_i$ 
    copies of each $u_i$ from 
    $\{u_{i,1}^{(1)},\ldots,u_{i,k}^{(1)}\}$ - denote the set of these by $S_1$ -
    and then match $S_1$ with $L_1$ perfectly using edges in $E_1$.
    Moreover, the vertices in $R_1 \setminus S_1$ must be matched through $E_R$ edges (which have weight~$1$) to $R_2$.
    As a result, the vertices in $R_2$ with the same indices as those in $S_1$ must be matched via edges in $E_2$ to $L_2$. Let $S_2$ denote the set of these vertices in $R_2$.
    The total weight of matching $L_1$ with $S_1$ perfectly is  $\Perm(A_\mathbf{z})$
    since the rows of $A_\mathbf{z}$ contain $z_i$ copies of row~$i$ of~$A$. The matching from $L_2$ to $S_2$ is independent of this and also has total weight
    $\Perm(A_\mathbf{z})$.
    To summarise,
    \begin{align}
        \nu(\mathbf{z})& \propto k^{-n} \nu(\mathbf{z})\propto k^{-n}\prod_{i=1}^m\binom{k}{z_i}\Perm(A_{\mathbf{z}})^2=\frac{\Perm(A_\mathbf{z})^2}{\prod_{i=1}^m z_i !}\cdot k^{-n}\prod_{i=1}^m\frac{k!}{(k-z_i)!}\notag\\
        &= \frac{\Perm(A_\mathbf{z})^2}{\prod_{i=1}^m z_i !}\cdot k^{-n}\prod_{i=1}^m \prod_{j=0}^{z_i-1} (k-j) = \frac{\Perm(A_\mathbf{z})^2}{\prod_{i=1}^m z_i !}\cdot \prod_{i=1}^m\prod_{j=0}^{z_i-1}\left(1-\frac{j}{k}\right). \label{eqn:nu-dist}
\end{align} 
Furthermore, using the facts that $k=\lceil \frac{4n^2}{\eps}\rceil $ and $z_i \leq k$ and $\sum_{i=1}^m z_i=n$,
    \begin{align*}
        1\ge \prod_{i=1}^m\prod_{j=0}^{z_i-1}\left(1-\frac{j}{k}\right) \ge \left(1-\frac{n}{k}\right)^n \ge e^{-2n^2/k} = e^{-\eps/2}.
    \end{align*}
    Plugging the estimate above into \eqref{eqn:nu-dist}, we have that the multiplicative error for each $\mathbf{z}$ between the proportional weights of $\nu$ and $\mu_{BS}$ is between $1$ and $e^{-\eps/2}$,
    which implies \eqref{eqn:closeness}.
\end{proof}

\section{Gaussian boson sampling} \label{sec:GBS}
Gaussian boson sampling (GBS) is a variant proposed by Hamilton, Kruse, Sansoni, Barkhofen, Silberhorn, and Jex \cite{HKSBSJ17}. See also \cite{KHSBSJ19}.
For $n$ photons and $m$ modes, the input is a Gaussian state characterised by a $2m\times 2m$ covariance matrix $\sigma$. 
Let $A$ be the \emph{sampling matrix} given by $A=
\begin{pmatrix}
    0 & I_m\\
    I_m & 0
\end{pmatrix}\left(I_{2m}-\sigma_Q^{-1}\right)$,
where $I_m$ denotes the $m\times m$ identity matrix and $\sigma_Q\defeq\sigma + {I_{2m}}/{2}$.
For any $\mathbf{z}\in \Phi_{m,n}$, the probability of the output pattern $\mathbf{z}$ is
\begin{align}\label{eqn:GBS}
    \mu_{GBS}(\mathbf{z})\defeq\frac{\Haf (A_{\mathbf{z}})}{\sqrt{\det (\sigma_Q)}\prod_{i=1}^m z_i!},
\end{align}
where $A_{\mathbf{z}}$ is the square matrix corresponding to $\mathbf{z}$.
We will consider the case where $n=O(\sqrt{m})$, where, with high probability, $z_i\le 1$ (see \cite[Section B]{KHSBSJ19}), and in this case the matrix $A_{\mathbf{z}}$ is the principal submatrix of~$A$ formed by keeping 
rows and columns~$i$ and $m+i$ for each $i\in [m]$ with $z_i=1$.\footnote{In the general case, the forming of $A_z$ is more complicated, and this is not needed for our paper. Details can be found in \cite[Section III]{KHSBSJ19}.}

Br\'adler, Dallaire-Demers, Rebentrost, Su, and Weedbrook \cite{BDRSW18} proposed using GBS to solve graph problems.
They showed that one can choose the covariance matrix $\sigma$ and $c>0$ such that, for any graph $G$ with adjacency matrix $A$, the distribution in \eqref{eqn:GBS} can be interpreted as a distribution over $S\subseteq V$:
\begin{align}\label{eqn:GBS-G}
    \mu_{GBS,G}(S) \propto c^{2|S|} \Haf(A_S)^2,
\end{align}
where $A_S$ is the principal submatrix of the adjacency matrix $A$ corresponding to $S$.
Note that $\Haf(A_S)$ counts the number of perfect matchings in the induced subgraph $G[S]$.   In fact, the distribution induced by GBS will have a parameter c in the range $(0,1)$.

In the rest of this section, we prove our main result, Theorem~\ref{thm:main-GBS}, which shows that the
distribution in \eqref{eqn:GBS-G} can be (classically) sampled in polynomial time for any graph $G$ and $c>0$.

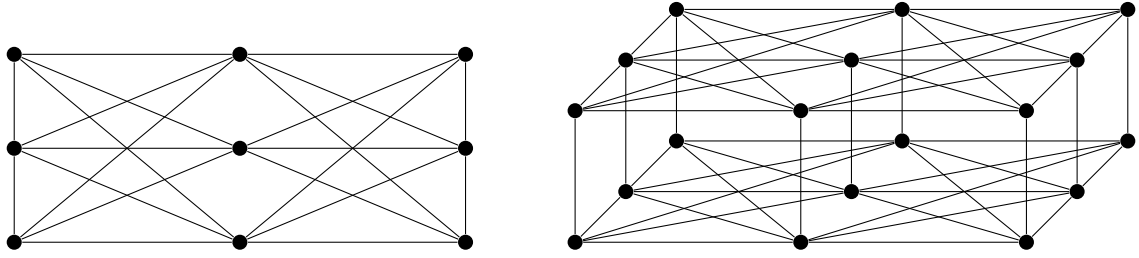
\begin{figure}[htbp]
    \centering
    \begin{tikzpicture}[every node/.style={circle, fill=black, minimum size=1.5mm, inner sep=0mm}]
    \node (A1) at (0,3.75) {};
    \node (A2) at (0,2.5) {};
    \node (A3) at (0,1.25) {};

    \node (B1) at (3,3.75) {};
    \node (B2) at (3,2.5) {};
    \node (B3) at (3,1.25) {};

    \node (C1) at (6,3.75) {};
    \node (C2) at (6,2.5) {};
    \node (C3) at (6,1.25) {};

    \foreach \i in {1,2,3} {
        \foreach \j in {1,2,3} {
            \draw (A\i) -- (B\j);
            \draw (B\i) -- (C\j);
        }
    }
    
    \draw (A1) -- (A2);
    \draw (A2) -- (A3);
    \draw (C1) -- (C2);
    \draw (C2) -- (C3);
\end{tikzpicture}
\hspace{1cm}
\begin{tikzpicture}[xshift=12cm, scale=1, every node/.style={circle, fill=black, minimum size=1.5mm, inner sep=0mm}]
    \node (A1) at (0,0,5.25) {};
    \node (A2) at (0,0,3.5) {};
    \node (A3) at (0,0,1.75) {};
    \node (B1) at (3,0,5.25) {};
    \node (B2) at (3,0,3.5) {};
    \node (B3) at (3,0,1.75) {};

    \node (C1) at (6,0,5.25) {};
    \node (C2) at (6,0,3.5) {};
    \node (C3) at (6,0,1.75) {};

    \node (A1') at (0,1.75,5.25) {};
    \node (A2') at (0,1.75,3.5) {};
    \node (A3') at (0,1.75,1.75) {};

    \node (B1') at (3,1.75,5.25) {};
    \node (B2') at (3,1.75,3.5) {};
    \node (B3') at (3,1.75,1.75) {};

    \node (C1') at (6,1.75,5.25) {};
    \node (C2') at (6,1.75,3.5) {};
    \node (C3') at (6,1.75,1.75) {};

    \foreach \i in {1,2,3} {
        \foreach \j in {1,2,3} {
            \draw (A\i) -- (B\j);
            \draw (B\i) -- (C\j);
        }
    }
    
    \draw (A1) -- (A2);
    \draw (A2) -- (A3);
    \draw (C1) -- (C2);
    \draw (C2) -- (C3);

    \foreach \i in {1,2,3} {
        \foreach \j in {1,2,3} {
            \draw (A\i') -- (B\j');
            \draw (B\i') -- (C\j');
        }
    }
    
    \draw (A1') -- (A2');
    \draw (A2') -- (A3');
    \draw (C1') -- (C2');
    \draw (C2') -- (C3');

    \foreach \i in {1,2,3} {
        \draw (A\i) -- (A\i');
        \draw (B\i) -- (B\i');
        \draw (C\i) -- (C\i');
    }

\end{tikzpicture}
\caption{On the left we have our original graph $G$, while on the right we have $G \boxprod K_2.$}
\label{fig:C-prod}
\end{figure}

The sampling algorithm that we use to prove Theorem~\ref{thm:main-GBS} 
takes the input graph~$G=(V,E)$ and constructs the graph $G\boxprod K_2$, which is the Cartesian product of $G$ and an edge. In order to construct $G\boxprod K_2$,
we start with a copy $G'=(V',E')$ of~$G$. 
Then $E_0 = \{  (v,v') : v\in V\}$
and $G\boxprod K_2 = (V\cup V', E \cup E' \cup E_0)$.
See Figure \ref{fig:C-prod} for an example.
Moreover, for each edge $e\in E\cup E'$, let $\lambda_e=c^2$, and for each $e\in E_0$, let $\lambda_e=1$.

Suppose that $M$ is a sample from the perfect matching distribution \eqref{eqn:PM-dist} on the graph $G\boxprod K_2$ with this weight function $\lambda$.
Let $S_M\subset V$ be the set of vertices that are endpoints of some edge in $M\cap E$. We have the following lemma, which shows that the
induced distribution of $S_M$ is given by Equation~\eqref{eqn:GBS-G}.

\begin{figure}[htbp]
  \begin{tikzpicture}[xshift=12cm, scale=1, every node/.style={circle, fill=black, minimum size=1.5mm, inner sep=0mm}]
    \node (A1) at (0,0,5.25) {};
    \node[fill=blue, minimum size=1.5mm] (A2) at (0,0,3.5) {};
    \node[fill=blue, minimum size=1.5mm] (A3) at (0,0,1.75) {};

    \node[fill=blue, minimum size=1.5mm] (B1) at (3,0,5.25) {};
    \node[fill=blue, minimum size=1.5mm] (B2) at (3,0,3.5) {};
    \node[fill=blue, minimum size=1.5mm] (B3) at (3,0,1.75) {};

    \node (C1) at (6,0,5.25) {};
    \node[fill=blue, minimum size=1.5mm] (C2) at (6,0,3.5) {};
    \node (C3) at (6,0,1.75) {};

    \node (A1') at (0,1.75,5.25) {};
    \node[fill=red, minimum size=1.5mm] (A2') at (0,1.75,3.5) {};
    \node[fill=red, minimum size=1.5mm] (A3') at (0,1.75,1.75) {};

    \node[fill=red, minimum size=1.5mm] (B1') at (3,1.75,5.25) {};
    \node[fill=red, minimum size=1.5mm] (B2') at (3,1.75,3.5) {};
    \node[fill=red, minimum size=1.5mm] (B3') at (3,1.75,1.75) {};

    \node (C1') at (6,1.75,5.25) {};
    \node[fill=red, minimum size=1.5mm] (C2') at (6,1.75,3.5) {};
    \node (C3') at (6,1.75,1.75) {};

    \foreach \i in {1,2,3} {
        \foreach \j in {1,2,3} {
            \draw[draw opacity=0.3, dashed] (A\i) -- (B\j);
            \draw[draw opacity=0.3, dashed] (B\i) -- (C\j);
        }
    }
    
    \draw[draw opacity=0.3, dashed] (A1) -- (A2);
    \draw[draw opacity=0.3, dashed] (A2) -- (A3);
    \draw[draw opacity=0.3, dashed] (C1) -- (C2);
    \draw[draw opacity=0.3, dashed] (C2) -- (C3);

    \foreach \i in {1,2,3} {
        \foreach \j in {1,2,3} {
            \draw[draw opacity=0.3, dashed] (A\i') -- (B\j');
            \draw[draw opacity=0.3, dashed] (B\i') -- (C\j');
        }
    }
    
    \draw[draw opacity=0.3, dashed] (A1') -- (A2');
    \draw[draw opacity=0.3, dashed] (A2') -- (A3');
    \draw[draw opacity=0.3, dashed] (C1') -- (C2');
    \draw[draw opacity=0.3, dashed] (C2') -- (C3');

    \draw[blue, thick] (A3) -- (B1);
    \draw[blue, thick] (A2) -- (B2);
    \draw[blue, thick] (B3) -- (C2);
    \draw[red, thick] (A3') -- (B2');
    \draw[red, thick] (A2') -- (B1');
    \draw[red, thick] (C2') -- (B3');

    \foreach \i in {1,2,3} {
        \draw[draw opacity=0.3, dashed] (A\i) -- (A\i');
        \draw[draw opacity=0.3, dashed] (B\i) -- (B\i');
        \draw[draw opacity=0.3, dashed] (C\i) -- (C\i');
    }
    
    \draw[black, thick] (A1) -- (A1');
    \draw[black, thick] (C3) -- (C3');
    \draw[black, thick] (C1) -- (C1');
  \end{tikzpicture}
  \hspace{1cm}
    \centering
  \begin{tikzpicture}[every node/.style={circle, fill=black, minimum size=1.5mm, inner sep=0mm}]
    \node[fill=purple!70, minimum size=2mm] (A1) at (0,3.75) {};
    \node[fill=purple!70, minimum size=2mm] (A2) at (0,2.5) {};
    \node (A3) at (0,1.25) {};

    \node[fill=purple!70, minimum size=2mm] (B1) at (3,3.75) {};
    \node[fill=purple!70, minimum size=2mm] (B2) at (3,2.5) {};
    \node[fill=purple!70, minimum size=2mm] (B3) at (3,1.25) {};

    \node (C1) at (6,3.75) {};
    \node[fill=purple!70, minimum size=2mm] (C2) at (6,2.5) {};
    \node (C3) at (6,1.25) {};

    \foreach \i in {1,2,3} {
        \foreach \j in {1,2,3} {
            \draw[draw opacity=0.3, dashed] (A\i) -- (B\j);
            \draw[draw opacity=0.3, dashed] (B\i) -- (C\j);
        }
    }
    
    \draw[draw opacity=0.3, dashed] (A1) -- (A2);
    \draw[draw opacity=0.3, dashed] (A2) -- (A3);
    \draw[draw opacity=0.3, dashed] (C1) -- (C2);
    \draw[draw opacity=0.3, dashed] (C2) -- (C3);
    
    \draw[red, thick] (A1) -- (B2);
    \draw[blue, thick] (A1) -- (B3);
    \draw[red, thick] (A2) -- (B3);
    \draw[blue, thick] (A2) -- (B2);
    \draw[purple!70, thick] (B1) -- (C2);
  \end{tikzpicture}
\caption{On the left, we have a perfect matching $M$ in $G \boxprod K_2$, with $M \cap E$ highlighted in red in the top copy, $M\cap E'$ blue in the bottom copy, and $M \cap E_0$ black.
On the right, we have our underlying graph $G$ with the set $S_M$ and the corresponding edges chosen by $M$ (in either or both copies) highlighted.}
\label{fig:K-2-matchings}
\end{figure}
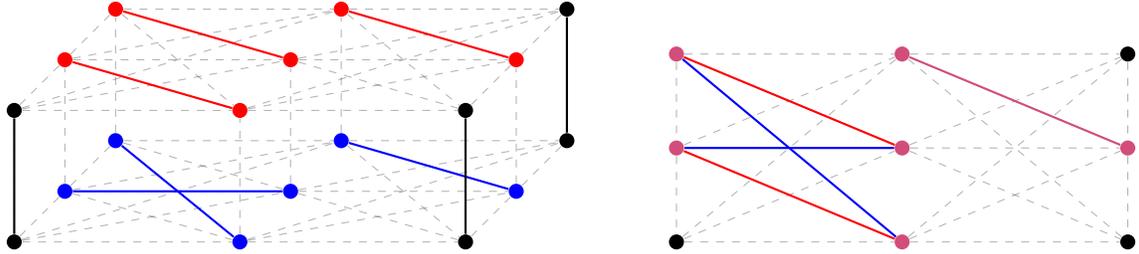

\begin{lemma}\label{lem:ind-dist}
    The induced distribution of $S_M$ is exactly 
  $\mu_{GBS,G}(S) $.
\end{lemma}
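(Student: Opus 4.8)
\emph{Proof plan.} The plan is to fix a target set $S\subseteq V$ and to compute the total $\lambda$-weight of all perfect matchings $M$ of $G\boxprod K_2$ with $S_M=S$; I will show this weight equals $c^{2\abs{S}}\Haf(A_S)^2$, so that after dividing by the partition function of \eqref{eqn:PM-dist} the induced law of $S_M$ coincides exactly with $\mu_{GBS,G}$ as defined in \eqref{eqn:GBS-G}.

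The heart of the argument is to pin down the structure that the condition $S_M=S$ forces on $M$. Every vertex $v\in V$ is matched by exactly one edge of $M$, and since the edges of $E'$ lie entirely inside $V'$, this edge is either an edge of $E$ or the unique $E_0$-edge $(v,v')$; by the definition of $S_M$, the former occurs precisely when $v\in S$. Hence $v$ is matched through $E_0$ exactly when $v\notin S$, which in turn forces $v'$ to be matched through $E_0$ when $v\notin S$, and therefore through an edge of $E'$ when $v\in S$. Consequently $M\cap E_0=\set{(v,v'):v\notin S}$ is completely determined by $S$, the restriction $M_1\defeq M\cap E$ is a perfect matching of $G[S]$, and $M_2\defeq M\cap E'$ is a perfect matching of $G'[S']$, where $S'\defeq\set{v':v\in S}$ is the copy of $S$ in $V'$. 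Conversely, any choice of a perfect matching $M_1$ of $G[S]$ and a perfect matching $M_2$ of $G'[S']$, together with the forced edge set $\set{(v,v'):v\notin S}$, assembles into a perfect matching $M$ of $G\boxprod K_2$ with $S_M=S$. This yields a bijection between $\set{M:S_M=S}$ and the product of the set of perfect matchings of $G[S]$ with the set of perfect matchings of $G'[S']$; in particular, when $\abs{S}$ is odd or $G[S]$ has no perfect matching, there is no such $M$, consistent with $\Haf(A_S)=0$ in that case.

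It then remains to read off weights and normalise. Under the bijection, $\prod_{e\in M}\lambda_e=\left(\prod_{e\in M_1}c^2\right)\left(\prod_{e\in M_2}c^2\right)=(c^2)^{\abs{S}/2}(c^2)^{\abs{S}/2}=c^{2\abs{S}}$, independently of the particular $M_1,M_2$, since $M_1$ and $M_2$ each have $\abs{S}/2$ edges and the $E_0$-edges carry weight $1$. Summing over the bijection and using $G'[S']\cong G[S]$ together with the identity (recorded in Section~\ref{sec:prelim}) that $\Haf(A_S)$ equals the number of perfect matchings of $G[S]$, the total weight of matchings with $S_M=S$ is $c^{2\abs{S}}\Haf(A_S)^2$. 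Dividing by $Z\defeq\sum_{M}\prod_{e\in M}\lambda_e$ and observing that $Z=\sum_{S\subseteq V}c^{2\abs{S}}\Haf(A_S)^2$ is exactly the normalising constant of \eqref{eqn:GBS-G}, we conclude that the induced distribution of $S_M$ is $\mu_{GBS,G}$ on the nose. The only genuinely delicate point is the forced-structure claim in the second paragraph — that $M\cap E_0$ is determined by $S$ and that the two halves of $M$ are perfect matchings of $G[S]$ and of its copy; everything afterwards is bookkeeping, so I do not anticipate any real obstacle.
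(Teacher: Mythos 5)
Your proposal is correct and follows essentially the same route as the paper's proof: it identifies that $S_M=S$ forces the $E_0$-edges to cover exactly $V\setminus S$ and splits $M$ into perfect matchings of $G[S]$ and of its copy, each contributing a factor $\Haf(A_S)$, giving total weight $c^{2\abs{S}}\Haf(A_S)^2$. Your write-up merely makes explicit the bijection and weight bookkeeping that the paper's terse proof leaves implicit; no gap.
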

\begin{proof}
For $M\sim\mu_{PM,\lambda}$, as it is a perfect matching of $G\boxprod K_2$,
each vertex $v$ in $V$ is either matched with an edge in $M\cap E$ or $M\cap E_0$.
If $v$ is matched in~$E_0$ then it is matched with its copy $v'$. Thus the set of endpoints of edges in $M\cap E'$, denoted $S'$, is the set of copies of vertices in~$S_M$.
See Figure \ref{fig:K-2-matchings} for an illustration.
    
To summarise, the probability  of outputting a particular $S\subset V$ is
    \begin{align*}
        \Pr[M\sim\mu_{PM,\lambda}]{S_M=S} & \propto c^{2\abs{S}} \Haf[A_S] \Haf[A_{S'}]\\
        & = c^{2\abs{S}} \Haf[A_S]^2 \qedhere
    \end{align*}
\end{proof}

Our problem now is to sample from $\mu_{PM,\lambda}$ for $G \boxprod K_2$. 
For this, we use Proposition~\ref{prop:JS89}, 
which can be used to efficiently sample perfect matchings in only if the ratio between the number of perfect and near-perfect matchings is bounded from above by a polynomial.
We will  show that this is indeed the case for $G\boxprod K_2$.

To be more specific, 
define a new weight function $\lambda'$ for $G \boxprod K_2$ as $\lambda_e'=4n^2\lambda_e$.
This weight function favours matchings of larger sizes, and will eventually be the weight function that we will use when invoking the Jerrum--Sinclair chain \cite{JS89}.
For a matching $M$ of $G\boxprod K_2$, let $w(M)\defeq\prod_{e\in M}\lambda_e$ and $w'(M)\defeq\prod_{e\in M}\lambda'_e$.
Let $M_k$ denote the set of matchings  with $k$ edges in $G\boxprod K_2$, where $0\le k\le n$. 
Let $Z_k$ be the total weight from $k$-edge matchings under $\lambda$, namely,
\begin{align*}
    Z_k\defeq\sum_{M\in M_k}w(M),
\end{align*}
and $Z\defeq\sum_{k=1}^n Z_k$ be the partition function of $\mu_{matching,\lambda}$.
Similarly, define $Z_k'$ and $Z'$ for $\lambda'$.
Note that the proof of Lemma~\ref{lem:ind-dist} implies that
\begin{align}\label{eqn:sum-Z_n}
    Z_{n} = \sum_{S\subseteq V}c^{2\abs{S}}\Haf(A_S)^2.
\end{align}

\begin{lemma}\label{lem:PM-vs-NPM} 
    $Z_{n-1}< 2n^2 Z_{n}$.
\end{lemma}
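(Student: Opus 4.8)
The plan is to derive a closed-form expression for $Z_{n-1}$ that mirrors the identity $Z_n = \sum_{S\subseteq V} c^{2|S|}\Haf(A_S)^2$ from \eqref{eqn:sum-Z_n}, and then to compare the two sums term by term after applying the AM--GM and Cauchy--Schwarz inequalities.

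First I would classify a near-perfect matching $M$ of $G\boxprod K_2$ by how it meets the three edge classes: write $M = M_0\cup M_1\cup M_2$ with $M_0\subseteq E_0$, $M_1\subseteq E$, $M_2\subseteq E'$, let $U\subseteq V$ be the set of vertices matched by $M_0$ to their copies, and put $W = V\setminus U$. Since $M_0$ already covers $U$ together with its copies, $M_1$ and $M_2$ are matchings of $G[W]$, covering sets $S_1,S_2\subseteq W$; counting covered vertices (there are $2n-2$ of them) gives $(|W|-|S_1|)+(|W|-|S_2|)=2$. The parity of $|W|$ then forces a dichotomy: if $|W|$ is even, one of $M_1,M_2$ is a perfect matching of $G[W]$ and the other misses exactly two vertices; if $|W|$ is odd, each of $M_1,M_2$ misses exactly one vertex. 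Using that a matching of $G[W]$ missing a set $D$ is exactly a perfect matching of $G[W\setminus D]$, and that $M$ has weight $c^{|S_1|+|S_2|}=c^{2|W|-2}$, this yields
\begin{align*}
Z_{n-1} &= \sum_{\substack{W\subseteq V\\ |W|\text{ even}}} 2c^{2|W|-2}\,\Haf(A_W)\sum_{\{x,y\}\subseteq W}\Haf(A_{W\setminus\{x,y\}})\\
&\qquad + \sum_{\substack{W\subseteq V\\ |W|\text{ odd}}} c^{2|W|-2}\Big(\sum_{x\in W}\Haf(A_{W\setminus x})\Big)^2.
\end{align*}

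For the first (even) sum I would apply AM--GM in the form $2c^{2|W|-2}\Haf(A_W)\Haf(A_{W\setminus\{x,y\}}) \le c^{2|W|}\Haf(A_W)^2 + c^{2(|W|-2)}\Haf(A_{W\setminus\{x,y\}})^2$ and then re-index the second piece by $W'' = W\setminus\{x,y\}$; since both $\binom{|W|}{2}$ and $\binom{n-|W''|}{2}$ are at most $\binom n2$, this bounds the even sum by $2\binom n2 Z_n = n(n-1)Z_n$. For the second (odd) sum, Cauchy--Schwarz gives $\big(\sum_{x\in W}\Haf(A_{W\setminus x})\big)^2 \le |W|\sum_{x\in W}\Haf(A_{W\setminus x})^2 \le n\sum_{x\in W}\Haf(A_{W\setminus x})^2$, and re-indexing by $W' = W\setminus x$ (with $x\notin W'$, which contributes a factor $n-|W'|\le n$) bounds it by $n^2 Z_n$. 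Adding the two estimates gives $Z_{n-1}\le (2n^2-n)Z_n$; since $G\boxprod K_2$ always has the perfect matching $E_0$, of weight $1$, we have $Z_n\ge 1>0$, whence $Z_{n-1}< 2n^2 Z_n$.

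The main thing to get right is the first step --- the decomposition of a near-perfect matching of $G\boxprod K_2$, in particular the parity dichotomy and the fact that $M_1,M_2$ are confined to $G[W]$ once the vertices matched by $M_0$ are removed. Once the displayed formula for $Z_{n-1}$ is established, the comparison with $Z_n$ is routine, using only AM--GM, Cauchy--Schwarz, and the monotonicity of binomial coefficients, and I do not anticipate any substantive difficulty there.
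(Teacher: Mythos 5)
Your proof is correct and takes essentially the same route as the paper's: an exact expansion of $Z_{n-1}$ into cross-products of Hafnians (yours indexed by the set $W$ of vertices not matched along $E_0$ together with the missed vertices, the paper's by the two uncovered vertices and the set $S$, which is the same identity after a change of variables), followed by AM--GM/Cauchy--Schwarz to compare termwise with $Z_n=\sum_{S\subseteq V}c^{2|S|}\Haf(A_S)^2$. Your bookkeeping is slightly more explicit (the parity dichotomy and the $x=y$ case in the odd sum), and your final bound $(2n^2-n)Z_n$ matches the paper's conclusion.
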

\begin{proof}
We write $\Haf(S)$ for $\Haf(A_S)$.
Note that by choosing the two vertices to remove, we have
\begin{align}
Z_{n-1} &= 2 \sum_{(u_1,u_2) \in \binom{V}{2}} \sum_{S \subseteq V \setminus \{u_1,u_2\}} c^{2\abs{S}+2}\Haf(S) \cdot \Haf(S \cup \{u_1,u_2\})\notag\\
    &\quad +\sum_{u_1,u_2 \in V} \sum_{S \subseteq V \setminus \{u_1,u_2\}} c^{2\abs{S}+2}\Haf(S \cup \{u_1\}) \cdot \Haf(S \cup \{u_2\}).\label{eqn:Z_n-1-sum}
    \end{align}
By the AM-GM inequality, it holds that
\begin{align*}
        &\quad \sum_{(u_1,u_2) \in \binom{V}{2}} \sum_{S \subseteq V \setminus \{u_1,u_2\}} c^{2\abs{S}+2}\Haf(S) \cdot \Haf({S \cup \{u_1,u_2\}}) \\
        &\le \frac{1}{2}\sum_{(u_1,u_2) \in \binom{V}{2}} \sum_{S \subseteq V \setminus \{u_1,u_2\}} \left(c^{2\abs{S}}\Haf(S)^2+ c^{2\abs{S}+4}\Haf(S \cup \{u_1,u_2\})^2\right)\\
        &\le \binom{n}{2} \sum_{S \subseteq V} c^{2\abs{S}}\Haf(S)^2 = \binom{n}{2}Z_{n},
    \end{align*}
    where we used \eqref{eqn:sum-Z_n} in the last line.
    Similarly,
    \begin{align*}
        &\quad \sum_{u_1,u_2 \in V} \sum_{S \subseteq V \setminus \{u_1,u_2\}} c^{2\abs{S}+2}\Haf(S \cup \{u_1\}) \cdot \Haf(S \cup \{u_2\}) \\
        & \le \frac{1}{2} \sum_{u_1,u_2 \in V} \sum_{S \subseteq V \setminus \{u_1,u_2\}} \left(c^{2\abs{S}+2}\Haf(S \cup \{u_1\})^2+ c^{2\abs{S}+2}\Haf(S \cup \{u_2\})^2\right)\\ 
        &\le n^2 \cdot \sum_{S \subseteq V} c^{2\abs{S}}\Haf(S)^2=n^2 Z_{n}.
    \end{align*}
    Combining with \eqref{eqn:Z_n-1-sum}, it follows that
    \begin{align*}
    	Z_{n-1} &< 2 n^2  Z_{n}. \qedhere 
    \end{align*}
\end{proof}

In addition, Heilmann and Lieb \cite[Theorem 7.1]{HL72} showed that the sequence $(Z_k)$ is log-concave.
This is a weighted version of Proposition \ref{prop:log-concave}.

\begin{proposition}\label{prop:Z-log-concave}
  For any $1\le k\le n-1$, $Z_{k-1}Z_{k+1}\le Z_{k}^2$.
\end{proposition}

By Proposition~\ref{prop:Z-log-concave} and Lemma~\ref{lem:PM-vs-NPM}, we have that $Z_{k-1}\le 2n^2Z_k$ for any $1\le k\le n$.
Since $Z_k'=(4n^2)^k Z_k$, it follows that for any $1\le k\le n$, $Z_{k-1}'\le Z_k'/2$.
Thus,
\begin{align}\label{eqn:PM-percent}
    Z'=\sum_{i=0}^nZ_i'\le \sum_{i=0}^n 2^{i-n}Z_n' <2Z_n'.
\end{align}

We can now prove Theorem~\ref{thm:main-GBS}.
\mainGBS*

\begin{proof} 
By Lemma~\ref{lem:ind-dist}, we just need to sample from $\mu_{PM,\lambda}$ over perfect matchings in $G\boxprod K_2$.
  To this end, we use Proposition~\ref{prop:JS89} to approximately sample from the distribution $\mu_{matching,\lambda'}$ in \eqref{eqn:matching} for $G\boxprod K_2$ with the weight function $\lambda'$ as described above.
  We set the error as $\eps'=\min\{1/4,\eps/2\}$ in Proposition~\ref{prop:JS89}.
  By \eqref{eqn:PM-percent}, there is $1/2-\eps'\geq1/4$ probability that the algorithm outputs a perfect matching. 
  We keep rejecting until we see a perfect matching.
  The distribution $\mu_{matching,\lambda'}$ conditioned on outputting a perfect matching is exactly $\mu_{PM,\lambda}$.
  Thus we can approximately sample from $\mu_{PM,\lambda}$ with $\eps$ error after at most $O(\log{(1/\eps)})$ rejections.
  As for the running time, we plug in $\lambda = O(n^2)$ in Proposition~\ref{prop:JS89},
  which finishes the proof.  
\end{proof}

\section{Concluding remarks}
In this paper we showed that there are polynomial-time classical algorithms to (approximately) sample from the distribution \eqref{eqn:GBS-G}, related to Gaussian Boson Sampling on graphs, and from the distribution \eqref{eqn:BS} when the input matrix $A$ is non-negative.
It would be interesting to know whether the non-negative restriction can be relaxed.
While the algorithm 
on which
our method is based works only for non-negative matrices,
in the more general case, our construction actually enables a reduction from sampling to approximating permanents of some complex weighted matrices.
This is because, as can be seen from the proof of Theorem \ref{thm:main-BS}, the marginal probability of choosing a row is the ratio between the permanents of two complex weighted matrices, even when conditioned on previous choices.
Thus, one can sample from the (conditional) marginal distributions of the rows, one by one, using an oracle that approximates the permanent of complex weighted matrices.
However, the latter task is \textbf{NP}-hard in general \cite{Mei23},
and it is an interesting open question to what extent this strategy can be useful.

\section*{Acknowledgement}

HG would like to thank Raul Garcia-Patron for pointing out a few useful references.

This project has received funding from the European Research Council (ERC) under the European Union's Horizon 2020 research and innovation programme (grant agreement No.~947778).

\bibliographystyle{alpha}
\bibliography{ref.bib}
\end{document}